\documentclass[12pt]{article}


\linespread{1.6}

\usepackage{graphics}
\usepackage{amsmath}
\usepackage{amssymb}
\usepackage{epsfig,wrapfig}
\usepackage{graphicx}
\usepackage{amsmath}
\usepackage{epsfig,wrapfig}
\usepackage{fullpage}
\usepackage{color}

\usepackage{cite}


\newtheorem{theorem}{Theorem}


\title{\LARGE \bf Sign Patterns of Inverse Doubly-Nonnegative Matrices }

\author{Sandip Roy and Mengran Xue}

\begin{document}
\maketitle

\begin{abstract}
The sign patterns of inverse doubly-nonnegative matrices are examined.  A necessary and sufficient condition is developed for a sign matrix to correspond to an inverse doubly-nonnegative matrix.  In addition, for a doubly-nonnegative matrix whose graph is a tree, the inverse is shown to have a unique sign pattern, which can be expressed in terms of a two-coloring of the graph. \newline
\end{abstract}

\noindent {\bf Keywords and Subject Classifications:} positive matrices and their generalizations (15B48), sign pattern matrices (15B35), Hermitian matrices (15B57)

\section{Introduction and Problem Formulation}

There is a considerable literature on inverse nonnegative matrices, which characterizes their sign patterns \cite{sign1,sign2,sign3}, positivity or spectra of certain splittings \cite{splitting1,splitting2}, and principal submatrices \cite{goo1}, among other properties (e.g. \cite{goo2}). In several application domains, inverses of nonnegative matrices which additionally are symmetric and positive semi-definite -- known as 
doubly nonnegative matrices -- are of interest.    For instance, the controllability analysis of network synchronization processes involves the inversion of Gram matrices whose entries are also nonnegative or strictly positive \cite{controllab1,controllab2}.  Analogously, doubly-nonnegative matrices and their inverses may arise in the context of e.g. inverse covariance estimation for linear regression \cite{covar}, quadratic programming \cite{quadratic}, and analysis of social/biological networks with antagonistic interactions \cite{antagonistic}.  In several of these settings, the sign pattern of the inverse matrix is particularly important as it gives insight into the co-dependencies among quantities of interest, and/or allows majorization of metrics.  Despite this broad motivation, to the best of our knowledge only one recent study, by M.~Fiedler in 2015, has considered the sign patterns of inverse doubly-nonnegative matrices \cite{fiedlerdouble}.  Fiedler's study characterizes the inverses of entry-wise strictly positive and positive definite matrices, obtaining necessary conditions on the number of negative entries, as well as sufficient conditions for a sign pattern to correspond to an inverse matrix.

Here, we examine the sign patterns of inverse doubly-nonnegative matrices, first for the general case and then for the specialization that the doubly-nonnegative matrix has a tree structure. For the general case, a necessary and sufficient condition for a given sign pattern to correspond to a inverse doubly-nonnegative matrix is developed; the result generalizes that in \cite{fiedlerdouble} to encompass nonnegative as well as strictly-positive matrices, and to achieve a necessary and sufficient condition.  For tree-structured doubly-nonnegative matrices, the sign pattern of the inverse is characterized completely in terms of the graph topology.

An $n \times n$ real square matrix $A=[a_{ij}]$ is considered.  The matrix $A$ is assumed to be doubly nonnegative, i.e. entry-wise nonnegative and symmetric positive semi-definite.  We make two additional assumptions throughout the study: 1) $A$ is invertible (and hence positive definite), and 2) $A$ is irreducible.  Invertibility is assumed since our interest is in characterizing the inverse, and irreducibility can be assumed without loss of generality since otherwise the inverse can be characterized for each diagonal block.  A graph $\Gamma=(V,E)$ is defined, where $V=\{ 1,2,\hdots , n \}$, and $(i,j) \in E$ for distinct vertices $i$ and $j$ if and only if $a_{ij}>0$.  Irreducibility implies that $\Gamma$ is connected.  We note that doubly nonnegative matrices have been studied in the linear algebra and also optimization literatures \cite{doubnon0,doubnon1,doubnon2,doubnon3}, with a particular focus on relating them with the classes of completely positive and copositive matrices, and exploiting their structure to solve quadratic optimization problems.

Our goal is to characterize the sign pattern of the entries in $A^{-1}$.  For this analysis, we define a sign matrix $S$ as one whose entries are either negative signs
(denoted by --) or positive signs (denoted by +).  For any real matrix $Q=[q_{ij}]$,
the sign matrix $S(Q)$ for the matrix is 
formed by replacing the negative entries in $Q$  ($q_{ij}<0$) with a negative sign, and
the nonnegative entries ($q_{ij} \ge 0$) with a positive sign.  Also, for any $n \times n$ square sign matrix $S=[s_{ij}]$ which is symmetric ($s_{ij}=s_{ji}$), a graph $\Delta(S) =(V_d,E_d)$ is defined as follows: $V_d=\{ 1, 2, \hdots, n \}$, and  $(i,j) \in E_d$ for distinct
vertices $i$ and $j$ if $s_{ij}$ is a negative sign.  We refer to $\Delta(S)$ as the {\em negative-sign graph} of the sign matrix $S$. We notice that the
negative-sign graph specifies the negative off-diagonal entries in the sign matrix $S$.

Two questions are studied. 1) Is a specified sign matrix $\overline{S}$ {\em feasible}, i.e. is it possible for the sign matrix of an inverse doubly nonnegative matrix to equal $\overline{S}$? 2) Can the sign matrix of the inverse be exactly determined given the graph $\Gamma$ of a doubly-nonnegative matrix?

\section{Results}

 A necessary and sufficient condition for the feasibility of a sign matrix is given in the following theorem:
 \begin{theorem}
An $n \times n$ sign matrix $\overline{S}$ is feasible, i.e. there is an irreducible doubly-nonnegative matrix $A$ such that $\overline{S}=S(A^{-1})$, if and only if: 1) $\overline{S}$ is symmetric, 
3) the diagonal entries in $\overline{S}$ are +, and 2) the negative-sign graph $\Delta(\overline{S})$ is connected.
 \end{theorem}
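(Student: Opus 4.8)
The plan is to prove the two implications separately, handling necessity first since it contains the only real idea, and then settling sufficiency by an explicit construction. Assume first that $\overline{S}=S(A^{-1})$ for some doubly-nonnegative, invertible, irreducible $A$. I would dispatch conditions (1) and (2) immediately: $A$ symmetric forces $A^{-1}$ symmetric and hence $S(A^{-1})$ symmetric, while $A$ positive definite forces $A^{-1}$ positive definite, so its diagonal entries $(A^{-1})_{ii}=e_i^{T}A^{-1}e_i>0$ receive a $+$ sign.

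The substance is condition (3), that $\Delta(\overline{S})$ is connected. Here I would set $B=A^{-1}$ and $\mu=\lambda_{\min}(B)=1/\rho(A)$, and invoke Perron--Frobenius: since $A$ is nonnegative and irreducible it has a strictly positive Perron eigenvector $v>0$ with $Av=\rho(A)v$, so $Bv=\mu v$ and $v$ is a \emph{positive} null vector of the positive-semidefinite matrix $M=B-\mu I$, whose off-diagonal entries agree with those of $B=A^{-1}$ (so $\Delta(S(M))=\Delta(\overline{S})$). Supposing $\Delta(\overline{S})$ disconnected would give a vertex partition $V=V_{1}\cup V_{2}$ with no negative cross-entries, i.e. every cross-block entry of $M$ is $\ge 0$. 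Expanding the scalar identity $0=v^{T}Mv$ into its three block contributions, each term is nonnegative---the two diagonal blocks because principal submatrices of a positive-semidefinite matrix are positive semidefinite, the cross term because it sums products of nonnegative entries against strictly positive $v_{i},v_{j}$---so each vanishes. The vanishing cross term then forces every cross-block entry of $M$, hence of $B$ and of $A=B^{-1}$, to be zero, making $A$ block-diagonal and contradicting irreducibility.

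For sufficiency I would construct a realizing matrix directly. Given (1)--(3), let $W$ be any symmetric nonnegative matrix with zero diagonal and $w_{ij}>0$ exactly on the edges of $\Delta(\overline{S})$; connectedness of $\Delta(\overline{S})$ makes $W$ irreducible. Fixing a scalar $c>\rho(W)$ and setting $B=cI-W$, $A=B^{-1}$, the eigenvalues $c-\lambda_{i}(W)>0$ make $B$ (and hence $A$) symmetric positive definite. The off-diagonal entries of $B$ are negative exactly on the edges of $\Delta(\overline{S})$ and zero (sign $+$) elsewhere, with $+$ on the diagonal, so $S(A^{-1})=S(B)=\overline{S}$. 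Finally the Neumann series $A=(cI-W)^{-1}=c^{-1}\sum_{k\ge 0}(W/c)^{k}$ converges and is entrywise nonnegative, and irreducibility of $W$ makes every entry strictly positive; thus $A$ is doubly-nonnegative and irreducible, realizing $\overline{S}$.

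I expect the main obstacle to be the connectivity half of necessity: conditions (1)--(2) and the entire sufficiency construction are routine, but excluding a disconnected negative-sign graph hinges on testing positive-semidefiniteness against the right vector. The key realization is that the Perron vector of $A$ is simultaneously the minimizing eigenvector of $A^{-1}$, so its strict positivity upgrades the single scalar equation $v^{T}Mv=0$ into entrywise vanishing of the cross block; this is precisely the step that consumes the irreducibility hypothesis.
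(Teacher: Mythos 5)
Your proposal is correct and follows essentially the same route as the paper: both prove connectivity of $\Delta(\overline{S})$ by feeding the strictly positive Perron eigenvector of $A$ into the quadratic form of $A^{-1}$ and showing the nonnegative cross-block must vanish, and both prove sufficiency with an irreducible symmetric $M$-matrix whose inverse is entrywise positive. Your reformulation via the singular PSD matrix $M=B-\mu I$ (so that the diagonal-block terms are nonnegative simply because principal submatrices of a PSD matrix are PSD) is a mildly cleaner packaging of the paper's Cauchy-interlacing/Courant--Fischer step, but it is the same argument.
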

 
 \begin{proof}
 Since $A^{-1}$ is symmetric and positive definite for any doubly nonnegatve matrix $A$, $\overline{S}$ can be a sign matrix of $A^{-1}$ only if it is symmetric and its diagonal entries are +.
 
We prove that it is necessary for the negative sign graph $\Delta(\overline{S})$ to be connected by contradiction.  Thus, assume that $\Delta(\overline{S})$ is not connected, but
$A^{-1}$ has the sign pattern $\overline{S}$ (i.e., $\overline{S}=S(A^{-1})$.
Then there exists a permutation matrix $P$ such
that $Z=PA^{-1} P^{-1}$ can be
partitioned as $Z=\begin{bmatrix} Z_{11} & Z_{12} \\ Z_{12}^T & Z_{22} \end{bmatrix}$, where $Z_{11}$ and $Z_{22}$ are square and positive semidefinite, and $Z_{12}$ is elementwise nonnegative.  

Now consider the spectrum of the matrix A.  The matrix $A$ is nonnegative and irreducible by assumption, and further $A$ has strictly positive diagonal entries since it is assumed to be positive definite.  It thus follows that $A$ is also aperiodic.  From the Frobenius-Perron theory, it follows that $A$ has a real positive eigenvalue $\lambda$ with algebraic multiplicity $1$ which is dominant (strictly larger in magnitude than all other eigenvalues).  Further, $A$ has an eigenvector ${\bf v}$ associated with $\lambda$ which is strictly positive, and unique to within a scaling.   From similarity, notice
that $\lambda$ is also the simple dominant eigenvalue of $Y=PAP^{-1}$, with corresponding strictly positive eigenvector given by ${\bf x}=P{\bf v}$.
Thus, the matrix $Z=Y^{-1}$ has a minimum eigenvalue $\gamma=\frac{1}{\lambda}$ which has algebraic multiplicity $1$, with corresponding strictly positive eigenvector ${\bf x}$.

To continue, notice that  
$\gamma=\frac{{\bf x}^T Z {\bf x}}{{\bf x}^T {\bf x}}$.  Substituting the partitioned form of $Z$, we 
obtain that 
\begin{equation}
\gamma=\frac{\begin{bmatrix} {\bf x}_1^T & {\bf x}_2^T \end{bmatrix} \begin{bmatrix} Z_{11} & Z_{12} \\ Z_{12}^T & Z_{22} \end{bmatrix} \begin{bmatrix}
{\bf x}_1 \\ {\bf x}_2 \end{bmatrix}}{\begin{bmatrix} {\bf x}_1^T & {\bf x}_2^T \end{bmatrix} \begin{bmatrix}
{\bf x}_1 \\ {\bf x}_2 \end{bmatrix}}
=\frac{{\bf x}_1^T Z_{11} {\bf x}_1+{\bf x}_2^T Z_{22} {\bf x}_2+2{\bf x}_1^T Z_{12} {\bf x}_2}{{\bf x}_1^T {\bf x}_1+{\bf x}_2^T {\bf x}_2},
\end{equation}
where ${\bf x}_1$ and ${\bf x}_2$ have commensurate dimensions with the partitions of $Z$.  From the Cauchy interlacing theorem, we see that the smallest eigenvalues of $Z_{11}$ and $Z_{22}$ are each at least $\gamma$, since they are principal submatrices of the positive-definite matrix $Z$.  From the Courant-Fisher theorem for symmetric matrices, it thus follows that ${\bf x}_1^T Z_{11}{\bf x}_1 \ge 
\gamma {\bf x}_1^T {\bf x}_1$ and ${\bf x}_2^T Z_{22}{\bf x}_2 \ge 
\gamma {\bf x}_2^T {\bf x}_2$.  Substituting,
we get:
\begin{equation}
    \gamma \ge 
    \frac{\gamma {\bf x}_1^T {\bf x}_1 +\gamma {\bf x}_2^T {\bf x}_2 +2 {\bf x}_1^T Z_{12} {\bf x}_2}{{\bf x}_1^T {\bf x}_1+{\bf x}_2^T {\bf x}_2}, 
\end{equation}
which then implies that 
\begin{equation}
{\bf x}_1^T Z_{12} {\bf x}_2 \le 0 \label{ineq:main}
\end{equation}
Since $Z_{12}$ is nonnegative and ${\bf x}_1$
and ${\bf x}_2$ are strictly positive, 
the inequality (\ref{ineq:main}) can only hold
if $Z_{12}$ is identically $0$.  However, in this case, $Y=Z^{-1}$  is not irreducible.  Thus, $A$ also is not irreducible, and a contradiction is reached.
Thus, necessity of the conditions in the
theorem has been verified.

To prove sufficiency, suppose that the $n \times n$ matrix $\overline{S}$ satisfies the three conditions, and construct the matrix $Q$ such that $\overline{S}=S(Q)$ as follows. First, we choose all diagonal
entries $q_{ii}$, $i=1,\hdots, n$ to equal 
$n$.  Meanwhile, each off-diagonal entry $q_{ij}$ is set to $-1$ if the
entry at row $i$ and column $j$ of $\overline{S}$
is --, and is set to $0$ if the entry of $\overline{S}$ is +.   The matrix $Q$ constructed in this way is a nonsingular symmetric $M$-matrix, since it is strictly diagonally dominant with positive diagonal entries and nonpositive off-diagonal entries.  Further, from the assumption that $\Delta(\overline{S})$ is connected, it follows that $Q$ is irreducible.  Thus,  the inverse
of $Q$ is entrywise strictly positive, positive definite, and symmetric. It thus follows that $Q$ is an (irreducible) doubly nonnegative matrix. $\Box$ \newline

\end{proof}

The feasible sign patterns of inverse doubly nonnegative matrices, as characterized in Theorem 1, are restricted compared to the feasible sign patterns for inverse nonnegative matrices as developed in \cite{sign1,sign2,sign3}, even for symmetric patterns and matrices.  The following is an example of a sign matrix which is feasible for an inverse nonnegative matrix, but not for an (irreducible) inverse doubly-nonnegative matrix:
\begin{equation}
\overline{S}=
    \begin{bmatrix}
    + & - & + & + \\ - & + & + & + \\
    + & + & + & - \\ + & + & - & + 
    \end{bmatrix}
\end{equation}
The sign matrix $\overline{S}$ does not have a 
connected negative-sign graph, hence it does not meet the conditions in Theorem $1$.  However,
a matrix with this sign pattern can be inverse nonnegative. For instance, the following matrix can be checked to be inverse nonnegative (indeed inverse positive):
\begin{equation}
Q=\begin{bmatrix} 1 & -2 & 1.1 & 0.01 \\ -2 & 1 & 0.01 & 1.1 \\ 1.1 & 0.01 & 1 & -2 \\ 0.01 & 1.1 &-2 & 1 \end{bmatrix}.
\end{equation}
Of note, the above inverse-nonnegative matrix $Q$ is symmetric, however it is not positive definite and hence is not doubly nonnegative.

{\em Remark 1:}  Theorem 1 has a close connection to the elegant sign-pattern analysis of inverse entry-wise- positive and positive-definite- matrices in \cite{fiedlerdouble}, which we recently became aware of. 
Importantly,  \cite{fiedlerdouble} also recognized the central role of the graph of the negative entries in the inverse matrix, and used an appropriate permutation to verify the result.  Relative to \cite{fiedlerdouble}, our study encompasses the possibility for zero entries in both the doubly-nonnegative matrix and its inverse, and hence also achieves a necessary and sufficient condition.  This broader result is achieved using a direct analysis of quadratic forms of the inverse, rather than a Hadamard-product-based argument. \newline

Finally, we study the question of whether the graph of a doubly-nonnegative matrix uniquely specifies or determines the sign pattern of its inverse.  The notion that the inverse is uniquely determined by the graph can be formalized as follows: the graph $\Gamma$ of a doubly-nonnegative matrix is said to uniquely determine the sign pattern of the inverse if, for every doubly-nonnegative matrix $A$ with graph $\Gamma$, the sign pattern of the inverse $S(A^{-1})$ is identical.  The question of whether the graph uniquely determines the inverse is interesting from an application standpoint, as it shows whether the interdependencies captured by the inverse of a doubly-nonnegative matrix (e.g. an inverse Gramian for a nonnegative system) are structural in nature.

The following theorem shows that the sign pattern of the inverse is uniquely specified if $\Gamma$ is a tree (a connected graph with a unique path between any two vertices), and 
fully characterizes the sign pattern of the inverse in this case.  This analysis uses a two-coloring of the graph $\Gamma$, i.e. a labeling of the vertices of the graph with two colors such that
no two adjacent vertices have the same color. If $\Gamma$ is a connected tree, then it has a two-coloring, and further the two-coloring is unique except for the possibility of the reversal of all vertices' colors.  The sign matrix of  $A^{-1}$ can be specified in terms of the two-coloring, as follows:
\begin{theorem}
The graph $\Gamma$ of an irreducible doubly-nonnegative matrix uniquely determines the sign pattern of the inverse if $\Gamma$ is a tree. Specifically, for a matrix $A$ with tree graph $\Gamma$,
all diagonal entries of $\overline{S}=S(A^{-1})$ are +.  Meanwhile, the off-diagonal entry at row $i$ and column $j$ is -- if vertices $i$ and $j$ have different colors in the two-coloring of the graph; otherwise, the entry is +.
\end{theorem}

\begin{proof}
The result is proved by induction on the number of vertices.  
As a basis step, consider any $2 \times 2$ 
irreducible doubly-nonnegative matrix $A_2$; note that the graph for the matrix $A_2$, which we call $\Gamma_2$, is necessarily a connected tree.  From the matrix inversion formula for $2 \times 2$ matrices and the positivity of the determinant, it is immediate that the sign matrix $S(A^{-1})$ is $\overline{S}=\begin{bmatrix} + & - \\
- & + \end{bmatrix}$, hence the basis is verified.

Now assume that the result holds for
any $k \times k$ irreducible doubly-nonnegative matrix $A_{k}$ whose graph $\Gamma$ is a tree, i.e. the sign pattern of the inverse is commensurate with the two-coloring as specified in the theorem statement.

Consider any
$(k+1) \times (k+1)$ irreducible doubly nonnegative matrix
$A_{k+1}$ whose graph is a tree. Notice that any such matrix $A_{k+1}$
can be written as:
\begin{equation}
    A_{k+1}=\begin{bmatrix} A_k & c{\bf e}_i \\ c{\bf e}_i^T & d \end{bmatrix},
\end{equation} 
where ${\bf e}_1$ is a 0--1 indicator vector whose $i$th entry is $1$,  $c$ and $d$ are positive scalars, and $A_k$ is some doubly-nonnegative matrix whose graph $\Gamma_k$ is a connected tree. Further, the graph $\Gamma_{k+1}$ for $A_{k+1}$ is seen to be formed from graph $\Gamma_k$ through the addition of a single vertex labeled $k+1$, which connects to vertex $i$. We note that the two-coloring of vertices $1,\hdots, k$ in $\Gamma_{k+1}$ is identical to that of $\Gamma_k$, while vertex $k+1$ is different in color from vertex $i$.

Since $A_{k+1}$ is positive definite, the diagonal entries of $A_{k+1}^{-1}$ are positive.  To characterize the off-diagonal entries, we apply the block matrix inversion formula.  Doing so,
$A_{k+1}^{-1}$ is given by:
\begin{equation}
\begin{bmatrix}
(A_k-\frac{c^2}{d}{\bf e}_i{\bf e}_i^T)^{-1}
& - \frac{c}{d}(A_k-\frac{c^2}{d}{\bf e}_i{\bf e}_i^T)^{-1} {\bf e}_i  \\
-\frac{c}{d}{\bf e}_i^T (A_k-\frac{c^2}{d}{\bf e}_i{\bf e}_i^T)^{-1} & \frac{1}{d-c^2 {\bf e}_i^T A_k^{-1} {\bf e}_i } .
\end{bmatrix}
\end{equation}

Next, consider the matrix $(A_k-\frac{c^2}{d}{\bf e}_i{\bf e}_i^T)^{-1}$. 
This matrix is positive definite since it is a principal submatrix of $A_{k+1}^{-1}$.  Thus, 
the matrix $A_k-\frac{c^2}{d}{\bf e}_i{\bf e}_i^T$ is also positive definite.  Since in addition this matrix differs from $A_k$ in only one diagonal entry, it is also entry-wise nonnegative, and its graph is a tree. From the induction hypothesis, 
the matrix $(A_k-\frac{c^2}{d}{\bf e}_i{\bf e}_i^T)^{-1}$ is thus seen to have a sign pattern that is identical to that of $A_{k}^{-1}$, and commensurate with the two-coloring of the graph $\Gamma_k$ of $A_{k}$.  Using this observation, the sign of the entry of $A_{k+1}^{-1}$ at row $i$ and column $j$ can be characterized for $i=1,\hdots, k$, $j=1,\hdots, k$, and $i \neq j$.  The sign is -- if the two-coloring of $\Gamma_k$, and hence $\Gamma_{k+1}$, has different colors at vertex $i$ and $j$.  It is + otherwise.

It remains to characterize the signs of the entries in the rightmost column of $A_{k+1}^{-1}$ (specifically, the entries at rows $j=1,\hdots, k$ and column $k+1$).  From the block inverse formula, these entries are given by ${\bf h}=- \frac{c}{d}(A_k-\frac{c^2}{d}{\bf e}_i{\bf e}_i^T)^{-1} {\bf e}_i$.  Since $c$ and $d$ are positive, the column vector ${\bf h}$ 
is equal to the $i$th column of $(A_k-\frac{c^2}{d}{\bf e}_i{\bf e}_i^T)^{-1}$, multiplied by a negative scalar.  Thus, notice that the entry
at row $j$ and column $k+1$ of $A_{k+1}^{-1}$
has the opposite sign as the entry at row $j$ and column $i$.  Since the vertex $k+1$ is different in color from the vertex $i$ in the two-coloring of $\Gamma_{k+1}$, the entry at row $j$ and column $k+1$ is negative if and only if vertices $j$ and $k+1$ have opposite colors, and is positive otherwise. Thus, the sign pattern of $A_{k+1}$ has been verified, and the theorem has been proved by induction. $\Box$ \newline
\end{proof}

Theorem $2$ demonstrates that the sign pattern of the inverse of a doubly nonnegative matrix is uniquely determined, when the graph $\Gamma$ of the matrix is a tree.   We do not have a complete answer to the question of whether the graph uniquely determine the sign pattern of the inverse, in the case where it is not a tree. However, we give a rough sketch for an argument, which can be used to show that the sign pattern of the inverse is not unique for many graphs. Specifically, let us consider doubly nonnegative matrices $A$ with a particular graph $\Gamma$ that is not a tree.  Since $\Gamma$ is not a tree, two different spanning trees of $\Gamma$ can be found, say $\Gamma_1$ and $\Gamma_2$.  Then, doubly nonnegative matrices with graphs $\Gamma_1$ and $\Gamma_2$ can be constructed, say $A_1$ and $A_2$.  Since $\Gamma_1$ and $\Gamma_2$ are trees, the sign patterns of the inverses of $A_1$ and $A_2$ are indicated by Theorem 2.  Indeed, provided that the two graphs have different two colorings, the two inverses have different sign patterns per the theorem.  Then, via small additive perturbations of $A_1$ and $A_2$, two doubly nonnegative matrices with graph $\Gamma$ can be constructed, whose inverses have different sign patterns. 
Thus, the graph of a doubly nonnegative matrix does not uniquely determine the inverse, provided that it has two spanning trees with different two-colorings.  This can be seen to include all graphs that are not bipartite.    
Some technical detail is needed to design the small perturbations so that      the sign patterns of the inverses are guaranteed and double nonnegativity is maintained; we have not developed this analysis in detail, and hence do not provide a formal result or proof.        We note that similar questions regarding determination of sign patterns have been studied in the context of sign solvability of linear systems \cite{brualdi}, however to the best of our knowledge these studies do not address the doubly-nonnegative matrix case.

{\em Remark 2:} The proof of Theorem 2 provides some further characterizations of the entries in the inverse matrix beyond their sign pattern.  For instance, the block inversion formula clarifies that, for rows of the inverse matrix corresponding to leaf vertices, the off-diagonal entries are a fixed multiple of the corresponding entries of the row corresponding to the parent node.  

{\em Remark 3:} Theorem 2 may equivalently be phrased in terms of the distances between vertices in the graph $\Gamma$: the entry at row $i$ and column $j$ of $A^{-1}$ is negative if and only if the distance between vertices $i$ and $j$ in $\Gamma$ is odd.

\subsection*{Acknowledgements}

This work was partially supported by United States National Science Foundation grants CNS-1545104 and CMMI-1635184.

\bibliographystyle{apa}

\end{document}